\algrenewcommand\algorithmicindent{1em}
\newcommand*\circled[1]{\tikz[baseline=(char.base)]{
            \node[shape=circle,draw,inner sep=0.5pt] (char) {#1};}}
\crefname{section}{\S}{\S\S}
\Crefname{section}{\S}{\S\S}
\pgfplotsset{compat=1.8}
\tikzstyle{vertex}=[]
\DeclarePairedDelimiter\abs{\lvert}{\rvert}%
\newcommand{\bounds}{\mathfrak{M}}
\newcommand{\atom}[1]{\alpha_{#1}}
\newcommand{\rlower}[1]{\mathsf{lower}(#1)}
\newcommand{\rupper}[1]{\mathsf{upper}(#1)}
\newcommand{\rpriority}[1]{\mathsf{priority}(#1)}
\newcommand{\rlink}[1]{\mathsf{link}(#1)}
\newcommand{\rsource}[1]{\mathsf{source}(#1)}
\newcommand{\rinterval}[1]{\mathsf{interval}(#1)}
\newcommand{\rlabel}[1]{\mathit{label}[#1]}
\newcommand{\rowner}[2]{\mathit{owner}[#1][#2]}
\newcommand{\MIN}{\mathsf{MIN}}
\newcommand{\MAX}{\mathsf{MAX}}
\newcommand{\CreateAtoms}[1]{\Call{Create\_Atoms}{#1}}
\newcommand{\CreateAtomsPlus}[1]{\Call{Create\_Atoms$^+$}{#1}}
\newcommand{\semantics}[1]{\mbox{$\llbracket \rinterval{#1} \rrbracket$}}
\newtheorem{theorem}{Theorem}
\date{}
\title{\Large \bf Delta-net: Real-time Network Verification Using Atoms}
\author{
{\rm Alex\ Horn}\\
\small Fujitsu Labs of America
\and
{\rm Ali Kheradmand}\\
\small University of Illinois at Urbana-Champaign
\and
{\rm Mukul R. Prasad}\\
\small Fujitsu Labs of America
} 
\begin{document}

\maketitle


\subsection*{Abstract}

Real-time network verification promises to automatically detect violations of network-wide reachability invariants on the data plane. To be useful in practice, these violations need to be detected in the order of milliseconds, without raising false alarms. To date, most real-time data plane checkers address this problem by exploiting at least one of the following two observations: (i)~only small parts of the network tend to be affected by typical changes to the data plane, and (ii)~many different packets tend to share the same forwarding behaviour in the entire network. This paper shows how to effectively exploit a third characteristic of the problem, namely: similarity among forwarding behaviour of packets through \emph{parts} of the network, rather than its entirety. We 
propose the first provably amortized quasi-linear algorithm to do so. We implement our algorithm in a new real-time data plane checker, Delta-net. Our experiments with SDN-IP, a globally deployed ONOS software-defined networking application, and several hundred million IP prefix rules generated using topologies and BGP updates from real-world deployed networks, show that Delta-net checks a rule insertion or removal in approximately 40 microseconds on average, a more than $10\times$ improvement over the state-of-the-art. We also show that Delta-net eliminates an inherent bottleneck in the state-of-the-art that restricts its use in answering Datalog-style ``what if'' queries.

\section{Introduction}
\label{sec:intro}

In an evermore interconnected world, network traffic is increasingly diverse and demanding, ranging from communication between small everyday devices to large-scale data centres across the globe. This diversity has driven the design and rapid adoption of new open networking architectures (e.g.~\cite{SDNSurvey}), built on programmable network switches, which make it possible to separate the control plane from the data plane. This separation opens up interesting avenues for innovation~\cite{McK2011}, including rigorous analysis for finding network-related bugs. Finding these bugs \emph{automatically} poses the following challenges.

Since the control plane is typically a Turing-complete program, the problem of automatically proving the presence and absence of bugs in the control plane is generally undecidable. However, the data plane, which is produced by the control plane, can be automatically analyzed.  While the problem of checking reachability properties in the data plane is generally NP-hard~\cite{MKACGK2011}, the problem becomes polynomial-time solvable in the restricted, but not uncommon, case where network switches only forward packets by matching IP prefixes~\cite{McG2012}. This theoretical fact helps to explain why \emph{real-time data plane checkers}~\cite{KZZCG2013,KCZVMcKW2013,YL2013} can often automatically detect violations of network-wide invariants on the data plane in the order of milliseconds, without raising false alarms.

To achieve this, most real-time network verification techniques exploit at least one of the following two observations: (i)~only small parts of the network tend to be affected by typical changes to the data plane~\cite{KZZCG2013,KCZVMcKW2013}, and (ii)~many different packets often share the same forwarding behaviour in the entire network~\cite{KZZCG2013,YL2013}. Both observations are significant because the former gives rise to \emph{incremental network verification} in which only changes between two data plane snapshots are analyzed, whereas the latter means that the analysis can be performed on a representative subset of network packets in the form of \emph{packet equivalence classes}~\cite{KZZCG2013,KCZVMcKW2013,YL2013}.

In spite of these advances, it is so far an open problem how to efficiently handle operations that involve swaths of packet equivalence classes~\cite{KZZCG2013}. This is problematic because it limits the real-time analysis of network failures, which are common in industry-scale networks, e.g.~\cite{D2008,BK2014}. Moreover, it essentially prevents data plane checkers from being used to answer ``what if'' queries in the style of recent Datalog approaches~\cite{FFPWGMM2015,LBGJV2015} because these hypothetical scenarios typically involve checking the fate of many or all packets in the entire network.


To address this problem, this paper shows how to effectively exploit a third characteristic of data plane checking, namely: similarity among forwarding behaviour of packets through \emph{parts} of the network, rather than its entirety. We show that our approach addresses fundamental limitations (\cref{sec:overview}) in the design of the currently most advanced data plane checker, Veriflow~\cite{KZZCG2013}.

In this paper, we propose a new real-time data plane checker, Delta-net (\cref{sec:delta-net}). Instead of constructing \emph{multiple forwarding graphs} for representing the flow of packets in the network~\cite{KZZCG2013}, Delta-net incrementally transforms a \emph{single edge-labelled graph} that represents \emph{all} flows of packets in the entire network. We present the first provably amortized quasi-linear algorithm to do so (\Cref{theorem:complexity-analysis}). Our algorithm incrementally maintains the lattice-theoretical concept of \emph{atoms}: a set of mutually disjoint ranges through which it is possible to analyze all Boolean combinations of IP prefix forwarding rules in the network so that every possible forwarding table over these rules can be concisely expressed and efficiently checked. This approach is inspired by Yang and Lam's atomic predicates verifier~\cite{YL2013}. While more general, their algorithm has a quadratic worst-case time complexity, whereas ours is quasi-linear. Since Delta-net's atom representation is based on lattice theory, it can be seen as an abstract domain (e.g.~\cite{CC1979}) for analyzing forwarding rules. What makes our abstract domain different from traditional ones is that we dynamically refine its precision so that false alarms never occur. 

For our performance evaluation (\cref{sec:experiments}), we use data sets comprising several hundred million IP prefix rules generated from the UC Berkeley campus, four Rocketfuel topologies~\cite{SMW2002} and real-world BGP updates~\cite{RouteViews}. As part of our experiments, we run SDN-IP~\cite{LHKMKAWB2013,SDNIP}, one of the most mature and globally deployed software-defined networking applications in the ONOS project~\cite{ONOS,ONOSDeployments}. We show that Delta-net checks a rule insertion or removal in tens of microseconds on average, a more than $10\times$ improvement over the state-of-the-art~\cite{KZZCG2013}. Furthermore, as an exemplar of ``what if'' scenarios, we adapt a link failure experiment by Khurshid et al.~\cite{KZZCG2013}, and show that Delta-net performs several orders of magnitude faster than Veriflow~\cite{KZZCG2013}. We discuss related work in~\cref{sec:related-work}.

\vspace{-1em}
\paragraph{Contributions.} Our main contributions are as follows:
\begin{itemize}[leftmargin=*,noitemsep,topsep=0pt,parsep=0pt,partopsep=0pt]
\item Delta-net (\cref{sec:delta-net}), a new real-time data plane checker that incrementally maintains a compact representation about the flows of all packets in the network, thereby supporting a broader class of scenarios and queries.
\item new realistic benchmarks (\cref{subsubsec:ONOS-SDN-IP}) with an open-source, globally deployed SDN application~\cite{SDNIP}.
\item experimental results (\cref{subsec:experimental-results}) that show Delta-net is more than $10\times$ faster than the state-of-the-art in checking rule updates, while also making it now feasible to answer an expensive class of ``what if'' queries.
\end{itemize}

\section{Overview of approach}
\label{sec:overview}

In this section, we motivate and explain our approach through a simple example (\cref{subsec:example}) that illustrates how Delta-net differs from the currently most advanced data plane checker, Veriflow~\cite{KZZCG2013}. In addition to performance considerations, we follow three design goals (\cref{subsec:design-goals}).

\subsection{Example}
\label{subsec:example}

\begin{figure}
\centering
\includegraphics[scale=.8]{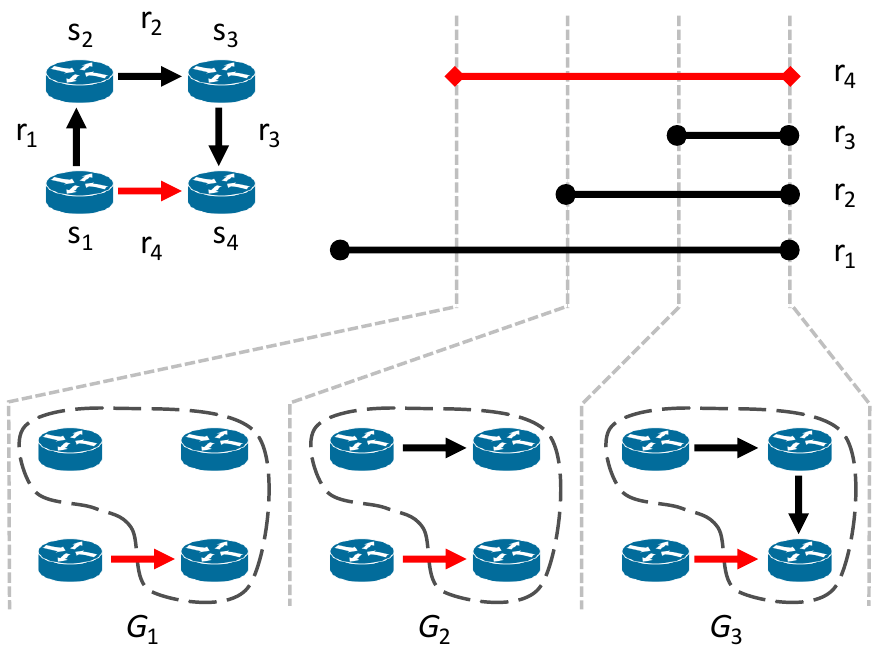}
\caption{When rule $r_4$ (red edge) is inserted into switch $s_1$, Veriflow constructs at least three forwarding graphs, which significantly overlap with each other.}
\label{fig:veriflow}
\end{figure}

Our example is based on a small network of four switches, shown in the upper-left corner of~\Cref{fig:veriflow}. The data plane in this network is depicted as a directed graph in which each edge denotes an IP prefix forwarding rule. For example, rule $r_1$ in~\Cref{fig:veriflow} is assumed to determine the packet flow for a specific destination IP prefix from switch $s_1$ to $s_2$. Suppose the network comprises rules $r_1$, $r_2$ and $r_3$ (black edges) installed on switches $s_1$, $s_2$ and $s_3$, respectively. Since each rule matches packets by a destination IP prefix, we can represent each rule's match condition by an interval. For example, the IP prefix $0.0.0.10/31$ (using the IPv4 CIDR format) corresponds to the half-closed interval $[10:12) = \{10, 11\}$ because $0.0.0.10/31$ is equivalent to the 32-bit binary sequence that starts with all zeros and ends with $101\ast$ where $\ast$ denotes an arbitrary bit. Here, we depict the intervals of all three rules as parallel black lines (in an arbitrary order) in the upper-right half of~\Cref{fig:veriflow}. The interpretation is that all three rules' IP prefixes overlap with each other.

Let us assume we are interested in checking the data plane for forwarding loops. Veriflow then first partitions all packets into \emph{packet equivalences classes}, as explained next. Consider a new rule $r_4$ (red edge in~\Cref{fig:veriflow}) to be installed on switch $s_1$ such that rule $r_4$ has a higher priority than the existing rule $r_1$ on switch $s_1$. As depicted in the upper half of~\Cref{fig:veriflow}, the new rule $r_4$ overlaps with all the existing rules in the network, irrespective of the switch on which they are installed. Veriflow identifies at least three equivalence classes that are affected by the new rule, each of which denotes a set of packets that experience the same forwarding behaviour throughout the network. Here, we depict equivalence classes by three interval segments (gray vertical dashed lines).

For each equivalence class, Veriflow constructs a \emph{forwarding graph} (denoted by $G_1$, $G_2$ and $G_3$ in~\Cref{fig:veriflow}) that represent how packets in each equivalence class can flow through the network. Veriflow can now check for, say, forwarding loops by traversing $G_1$, $G_2$ and $G_3$. Note that the edge that represents the packet flow from switch $s_1$ to $s_2$ is excluded from all three forwarding graphs because on switch $s_1$, for the three depicted equivalence classes, the packet flow is determined by the higher-priority rule $r_4$ rather than the lower-priority rule $r_1$.

Crucially, in our example, the forwarding graphs that Veriflow constructs are essentially the same to previously constructed ones (dashed areas) except for the new edge from switch $s_1$ to $s_4$. In addition, $G_1$, $G_2$ and $G_2$ share much in common, e.g. $G_2$ and $G_3$ have the same edge from switch $s_2$ to $s_3$. As the number of rules in the network increases, so may the commonality among forwarding graphs. In real networks, this leads to inefficiencies that pose problems under real-time constraints.

\begin{figure}
\centering
\includegraphics[scale=.8]{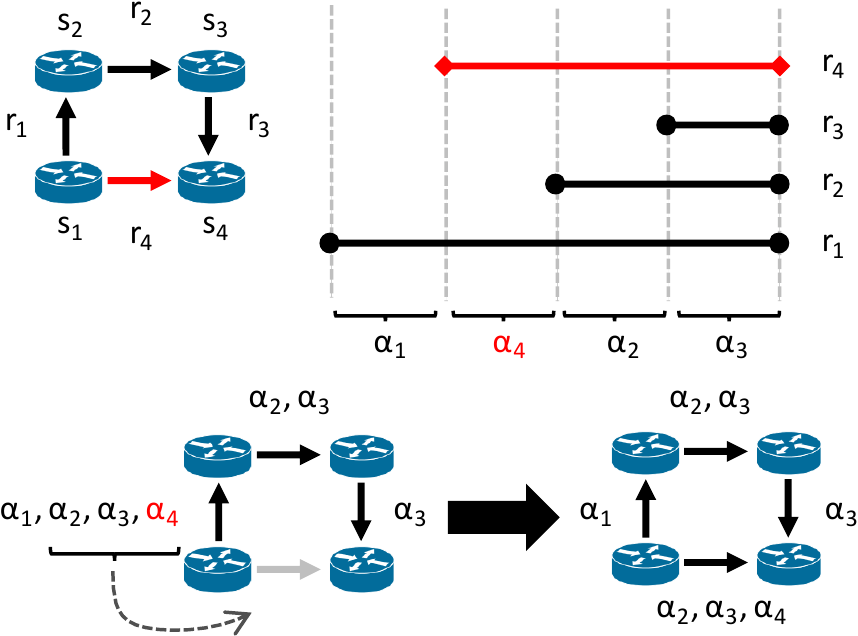}
\caption{Rather than constructing \emph{multiple} forwarding graphs that potentially overlap (\Cref{fig:veriflow}), Delta-net incrementally transforms a \emph{single} edge-labelled graph.}
\label{fig:delta-net}
\end{figure}

We now illustrate how our approach avoids these kind of inefficiencies. For illustrative purposes, assume we start again with the network in which only rules $r_1$, $r_2$ and $r_3$ (black edges) have been installed on switches $s_1$, $s_2$ and $s_3$, respectively. The  collection of IP prefixes in the network induces half-closed intervals, each of which we call an \emph{atom}. A set of atoms can represent an IP prefix. For example, as shown at the top of~\Cref{fig:delta-net}, the set $\{\atom{2}, \atom{3}\}$ represents the IP prefix of rule $r_2$.

At the core of our approach is a directed graph whose edges are labelled by atoms. The purpose of this edge-labelled graph is to represent packet flows in the entire network. For example, to represent that $r_2$ forwards packets from switch $s_2$ to $s_3$ we label the corresponding edge in the directed graph with the atoms $\atom{2}$ and $\atom{3}$.

Of course, an edge-labelled graph that represents all flows in the network may need to be transformed when a new rule is inserted or removed. The bottom of~\Cref{fig:delta-net} illustrates the nature of such a graph transformation in the case where rule $r_4$ is inserted into switch $s_1$. The point of the drawing is threefold. First, observe that the rule insertion of $r_4$ results in the creation of a new atom $\atom{4}$ (red label in the graph on the bottom-left corner). Using the newly created atom, $r_4$'s IP prefix can now be precisely represented as the set of atoms $\{\atom{2}, \atom{3}, \atom{4}\}$. Second, when a new atom, such as $\atom{4}$, is created, existing atom representations may need to be updated. For example, $r_1$'s IP prefix on the edge from switch $s_1$ to $s_2$ needs to be now represented by four instead of only three atoms. Finally, since rule $r_4$, recall, has higher priority than rule $r_1$, three of those four atoms need be moved to the newly inserted edge from switch $s_1$ to $s_4$ (as shown by a dashed arrow in~\Cref{fig:delta-net}). This results in the edge-labelled graph shown in the bottom-right corner of~\Cref{fig:delta-net} where the edges from switch $s_1$ correspond to the forwarding action of the rules $r_1$ and $r_4$ and are labelled by the set of atoms $\{\atom{1}\}$ and $\{\atom{2},\atom{3},\atom{4}\}$, respectively. Crucially, note how our approach avoids the construction of \emph{multiple} overlapping forwarding graphs by transforming a \emph{single} edge-labelled graph instead.

\begin{figure}
\centering
\includegraphics[scale=.317]{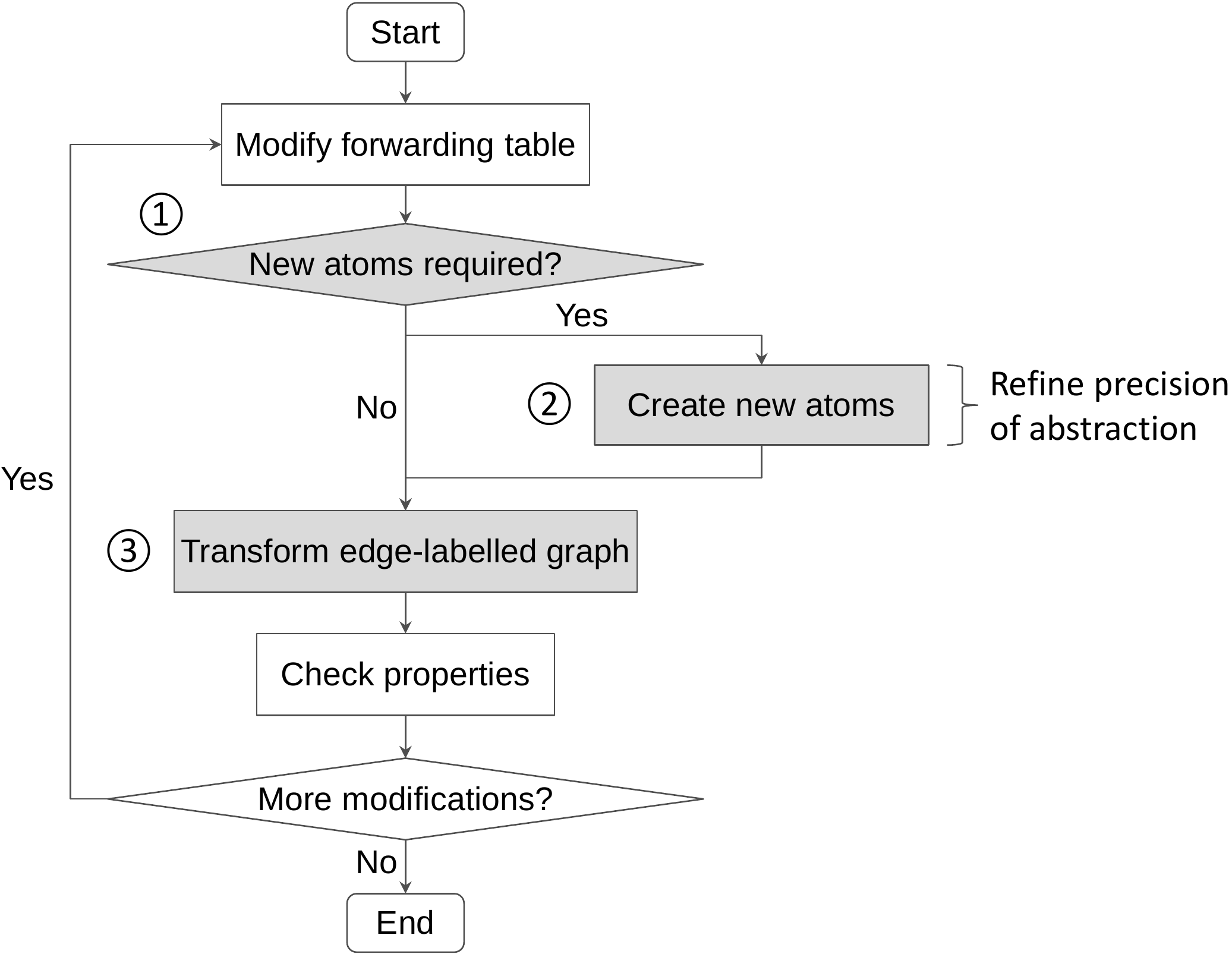}
\caption{Delta-net incrementally maintains atoms, a family of sets of packets that can represent all Boolean combinations of IP prefix forwarding rules.}
\label{fig:flowchart}
\end{figure}

Delta-net's key components and sequence of steps are depicted in~\Cref{fig:flowchart}. In this flowchart, the steps in shaded areas --- annotated by $\{\circled{1},\circled{2}\}$ and $\{\circled{3}\}$ in~\Cref{fig:flowchart} --- are new and described in~\cref{subsec:atom-representation}~and~\cref{subsec:algorithm}, respectively. Here, we only highlight two main fundamental differences between Delta-net and Veriflow:

\begin{itemize}[leftmargin=*,noitemsep,topsep=0pt,parsep=0pt,partopsep=0pt]
\item Veriflow generally has to traverse rules in different switches to compute equivalence classes and forwarding graphs: in our example, when rule $r_4$ is inserted into switch $s_1$, Veriflow traverses all rules in the network (four black edges in~\Cref{fig:veriflow-vs-delta-net-a}). By contrast, our approach concentrates on the affected rules in the modified switch. For example, when rule $r_4$ is inserted into switch $s_1$, the two black edges in~\Cref{fig:veriflow-vs-delta-net-b} show that only rules $r_1$ and $r_4$ on switch $s_1$ are inspected by Delta-net to transform the edge-labelled graph.
\item Veriflow recomputes affected equivalence classes and forwarding graphs each time a rule is inserted or removed, whereas Delta-net incrementally transforms a single edge-labelled graph to represent the flows of \emph{all} packets in the entire network. This significantly broadens the scope of Delta-net (\cref{subsec:design-goals}) because it can more efficiently handle network failures and ``what if'' queries regarding \emph{many or all} packets in the network.
\end{itemize}

\begin{figure}
\centering
\begin{subfigure}{0.48\linewidth}
\centering
\includegraphics[scale=.9]{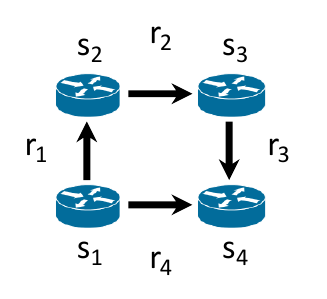}
\caption{Veriflow}
\label{fig:veriflow-vs-delta-net-a}
\end{subfigure}
~
\begin{subfigure}{0.48\linewidth}
\centering
\includegraphics[scale=.9]{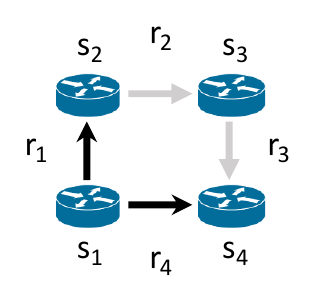}
\caption{Delta-net}
\label{fig:veriflow-vs-delta-net-b}
\end{subfigure}
\caption{Comparison of processed rules (black edges).}
\label{fig:veriflow-vs-delta-net}
\end{figure}



\subsection{Functional design goals}
\label{subsec:design-goals}

\begin{table}[b]
\centering
\begin{tabular}{clr}
  \toprule
    \textbf{Priority} & \multicolumn{1}{c}{\textbf{IP Prefix}} & \multicolumn{1}{c}{\textbf{Action}} \\
    \midrule
    High      & $0.0.0.10/31$ & drop    \\
    Low       & $0.0.0.0/28$  & forward \\
  \bottomrule
\end{tabular}
\caption{A forwarding table for a network switch.}
\label{table:forwarding-rules}
\end{table}

In addition to more stringent real-time constraints, our work is guided by the following three design goals:
\begin{enumerate}[leftmargin=*,noitemsep,topsep=3pt,parsep=0pt,partopsep=0pt]
\item Similar to Datalog-based approaches~\cite{FFPWGMM2015,LBGJV2015}, we want to efficiently find \emph{all} packets that can reach a node $B$ from $A$, avoiding restrictions of SAT/SMT-based data plane checkers (e.g.~\cite{MKACGK2011}), which can solve a broader class of problems but require multiple calls to their underlying SAT/SMT solver to find more than one witness for the reachability from $A$ to $B$.
\item Our design should support known incremental network verification techniques that construct forwarding graphs for the purpose of checking reachability properties each time a rule is inserted or removed~\cite{KZZCG2013}. This is important because it preserves one of the main characteristics of previous work, namely: it is practical, and no expertise in formal verification is required to check the data plane.
\item When real-time constraints are less important (as in the case of pre-deployment testing, e.g.~\cite{ZKVM2012}), we want to facilitate the answering of a broader class of (possibly incremental) reachability queries, such as \emph{all-pairs reachability} queries in the style of recent Datalog approaches~\cite{FFPWGMM2015,LBGJV2015}. These kind of queries generally concern the reachability between \emph{all} packets and pairs of nodes in the network. We also aim at efficiently answering queries in scenarios that involve many or all packets, such as link failures~\cite{KZZCG2013}.
\end{enumerate}

After explaining the technical details of Delta-net, we describe how it achieves these design goals (\cref{subsec:revisited-design-goals}).

\section{Delta-net}
\label{sec:delta-net}

In this section, we explain Delta-net's underlying atom representation (\cref{subsec:atom-representation}), and its algorithm for modifying rules through insertion and removal operations (\cref{subsec:algorithm}). Recall that these two subsections correspond to the steps annotated by $\{\circled{1},\circled{2}\}$ and $\{\circled{3}\}$ in~\Cref{fig:flowchart}, respectively.

We illustrate the internal workings of Delta-net using the simple forwarding table in~\Cref{table:forwarding-rules}. It features two rules, $r_H$ and $r_L$, whose subscript corresponds to their priority: the higher-priority rule, $r_H$, drops packets whose destination address matches the IP prefix $0.0.0.10/31$, whereas the lower-priority rule, $r_L$, forwards packets destined to the IP prefix $0.0.0.0/28$. We elide details about the next hop (where a matched packet should be sent) because it is not pertinent to the example.

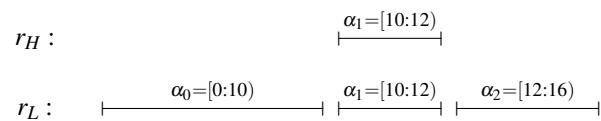
\begin{figure}[b]
\xymatrix@R=1.45em@C=0.89em{
 r_H: &&&&&&& \ar@{|-|}[rrr]^{\atom{1} = [10:12)} &&&&&&& \\
 r_L: &\ar@{|-|}[rrrrrr]^{\atom{0} = [0:10)} &&&&&&\ar@{|-|}[rrr]^{\atom{1} = [10:12)}&&&\ar@{|-|}[rrrr]^{\atom{2} = [12:16)}&&&&
}
\caption{Atoms for the IP prefix rules in~\Cref{table:forwarding-rules}.}
\label{fig:half-closed-intervals}
\end{figure}

As alluded to in the previous section (\cref{subsec:example}), we can think of IP prefixes as half-closed intervals: $r_H$'s IP prefix, $0.0.0.10/31$, corresponds to the half-closed $[10:12)$. Similarly, $0.0.0.0/28\, =\, [0:16)$ for $r_L$'s IP prefix. Of course, this interval representation can be easily generalized to IPv6 addresses. Next, we show how Delta-net represents rules with such IP prefixes, for some fixed IP address length.

\subsection{Atom representation}
\label{subsec:atom-representation}

In this subsection, we describe the concept of atoms; how they are maintained is essential to the rule modifications algorithms in the next subsection (\cref{subsec:algorithm}).

Intuitively, we can segment the IP prefixes of all the rules in the network into disjoint half-closed intervals, which we call atoms. This kind of segmentation is illustrated in~\Cref{fig:half-closed-intervals} using the rules $r_H$ and $r_L$ in~\Cref{table:forwarding-rules}.\footnote{Appendix~\ref{appendix:Hasse-diagram} illustrates the fact that atoms induce a Boolean lattice.}

By construction of atoms, we can represent an IP prefix of a rule $r$ as a \emph{set of atoms}. We denote this IP prefix representation by $\semantics{r}$. For example, $r_H$'s IP prefix, $\semantics{r_H}$, corresponds to the singleton set consisting of the atom $\atom{1}$, whereas $r_L$'s IP prefix is $\semantics{r_L} = \{\atom{0},\atom{1},\atom{2}\}$. Using these atoms, we can represent, for example, the set difference $\semantics{r_L} - \semantics{r_H}$ to formalize the fact that $r_L$ can only match packets that are not matched by the higher-priority rule $r_H$. Next, we explain how to devise an efficient representation of atoms such that we can efficiently verify network-wide reachability properties when a rule is inserted or removed (\cref{subsec:algorithm}).

At the core of our atom representation is a function, $\bounds$, that maps non-negative integers to identifiers. Specifically, $\bounds$ is an ordered map that contains key/value pairs $n \mapsto \atom{i}$ where $n$ is a lower or upper bound of an IP prefix of a rule $r$ (denoted by $\rlower{r}$ and $\rupper{r}$, respectively) and $\atom{i}$ is a unique identifier, called \emph{atom identifier}. For example, $\rlower{r_H} = 10$ and $\rupper{r_H} = 12$. More generally, we ensure that $\MIN \leq \rlower{r} < \rupper{r} \leq \MAX$ for every rule $r$ where $\MIN = 0$ and $\MAX = 2^k$ for some fixed positive integer $k$, e.g. $k = 32$ for 32-bit IP addresses. We maintain the invariant that $\bounds$ contains only unique keys. The interpretation of each pair $n \mapsto \atom{i}$ in $\bounds$, for all $n < \MAX$, is as follows: the atom identifier $\atom{i}$ denotes the \emph{atom} $[n:n')$ where $n'$ is the next numerically greater key in $\bounds$. Each atom identifier, therefore, uniquely denotes a half-closed interval, i.e. an atom. For efficiency reasons, we ensure that each atom identifier is generated from a consecutively increasing counter that starts at zero. Before processing any rules, we initialize $\bounds$ by inserting $\MIN \mapsto \atom{0}$ and $\MAX \mapsto \atom{\infty}$ where $\atom{\infty}$ is the greatest atom identifier.

\begin{figure}
\centering
\scalebox{0.8}{
\begin{tikzpicture}[very thick,level/.style={sibling distance=40mm/#1}]
\node [vertex] (r){$10 \mapsto \atom{1}$}
  child {
    node [vertex] (a) {$\MIN \mapsto \atom{0}$}
    child[missing]
    child {
      node [vertex] {$8 \mapsto \atom{4}$} edge from parent[-, dashed]
    }
  }
  child {
    node [vertex] {$16 \mapsto \atom{3}$}
    child {
      node [vertex] {$12 \mapsto \atom{2}$}
    }
    child {
      node [vertex] {$\MAX \mapsto \atom{\infty}$}
    }
  };
\end{tikzpicture}
}
\caption{Balanced binary search tree of key/value pairs after inserting the half-closed intervals from~\Cref{fig:half-closed-intervals}.}
\label{fig:bst}
\end{figure}

We define the procedure \CreateAtoms{$r$}, where $\rinterval{r} = [\rlower{r}:\rupper{r})$ is the half-closed interval corresponding to $r$'s IP prefix, such that, if $\bounds$ has not already paired $\rlower{r}$ with an atom identifier, then it inserts into $\bounds$ the key/value pair $\rlower{r} \mapsto \atom{j}$ for the next available counter value $\atom{j}$; similarly, we conditionally insert into $\bounds$ the key/value pair $\rupper{r} \mapsto \atom{k}$ for the next available counter value $\atom{k}$. Note that after \CreateAtoms{$r$} has been called, $\bounds$ may contain $0$, $1$, or $2$ new atoms (but not more). For example, IP prefixes such as $1.2.0.0/16$ and $1.2.0.0/24$ have the same lower bound because they only differ in their prefix lengths, and so together yield only three and not four atoms. While the values of atom identifiers depend on the order in which rules are inserted, the set of generated atoms at the end is invariant under the order in which \textproc{Create\_Atoms} is called. We also remark that the number of atoms represented by $\bounds$ is equal to $\bounds$'s size minus one. 

For our complexity analysis, we assume that the $\bounds$'s insertion and retrieval operations run logarithmically in the size of $\bounds$, which could be achieved with a balanced binary-search tree such as a red-black tree. In this case, \Cref{fig:bst} (excluding the leaf node connected by a dashed edge) illustrates the balanced binary search tree that results after \CreateAtoms{$r_H$} and \CreateAtoms{$r_L$} has been called for the rules $r_H$ and $r_L$ in~\Cref{table:forwarding-rules}. For example, $\atom{1}$ at the root of the binary search tree in~\Cref{fig:bst} denotes the atom $[10:12)$. When clear from the context, we refer to atom identifiers and atoms interchangeably.

\subsection{Edge labelling algorithm}
\label{subsec:algorithm}

Using our atom representation (\cref{subsec:atom-representation}), we show how to efficiently label the edges of a directed graph that succinctly describes the flow of all packets in the entire network. Our algorithm is incremental in the sense that it only changes edge labels that are affected by the insertion or removal of a rule. Our algorithm, which achieves this incrementality, requires the following notions.

We denote an IP prefix forwarding rule by $r$, possibly with a prime symbol. Each rule $r$ is associated with $\rpriority{r}$ and $\rlink{r}$, as explained in turn. We assume that rules in the same forwarding table whose IP prefixes overlap have pair-wise distinct priorities, denoted by $\rpriority{r}$.\footnote{This assumption is reasonable for, say, OpenFlow tables where the matching of rules with the same highest priority is explicitly undefined.} For all rules $r$ and $r'$ in the same forwarding table, $r$ has a \emph{higher priority than} $r'$ if $\rpriority{r} > \rpriority{r'}$; equivalently, $\rpriority{r} < \rpriority{r'}$ means that $r$ has a \emph{lower priority than} $r'$. Note that longest-prefix routing can be simulated by assigning rule priorities according to prefix lengths~\cite{YL2013}. We denote by $\rlink{r}$ a directed edge in a graph that is induced by a network topology. For theoretical and practical reasons (see also~\cref{subsec:implementation}), $\rlink{r}$ is purposefully more general than a pair of, say, ports. We write $\rsource{r}$ for the node in the graph on which $\rlink{r}$ is incident. For example, $\rsource{r_1} = s_1$ and $\rsource{r_2} = s_2$ in~\Cref{fig:delta-net}.

From a high-level perspective, Delta-net consists of two algorithms, one for inserting (\Cref{alg:insert-rule}) and another for removing (\Cref{alg:remove-rule}) a single rule. Both algorithms accesses three global variables: $\bounds$, $\mathit{label}$ and $\mathit{owner}$, as described in turn. First, $\bounds$ is the balanced binary tree described in~\cref{subsec:atom-representation}, e.g.~\Cref{fig:bst}. Second, given a $\mathit{link}$ in the network topology, $\rlabel{\mathit{link}}$ denotes a set of atoms, each of which corresponds to a half-closed interval that a designated field in a packet header $h$ can match for $h$ to be forwarded along the $\mathit{link}$. Finally, $\mathit{owner}$ is an array of hash tables, each of which stores a balanced binary search tree containing rules ordered by priority. More accurately, $\mathit{owner}$ is an array of sufficient size such that, for every atom $\alpha$, $\mathit{owner}[\alpha]$ is a hash table that maps a $\mathit{source}$ node to a balanced binary search tree, $\mathit{bst}$, that orders rules in the $\mathit{source}$ node that contain atom $\alpha$ in their interval according to their priority, i.e., we maintain the invariant that $\mathit{bst}$ contains only rules $r$ such that $\mathit{source} = \rsource{r}$ and $\alpha \in \semantics{r}$ where $\mathit{bst} = \rowner{\alpha}{\mathit{source}}$. The highest-priority rule in a non-empty balanced binary search tree $\mathit{bst}$ can be retrieved via $\mathit{bst}.\mathsf{highest\_priority\_rule}()$. We remark that we do not use a priority queue because~\Cref{alg:remove-rule} described later (\cref{subsubsec:remove}) needs to be able to remove arbitrary rules, not just the highest-priority one. We write $r \in \mathit{bst}$ when rule $r$ is stored in $\mathit{bst}$.

\begin{algorithm}[t]
\begin{algorithmic}[1]
\Procedure{Insert\_Rule}{$r$}
\State $\Delta \gets \CreateAtomsPlus{r}$ \Comment{$\abs{\Delta} \leq 2$}\label{line:insert-rule-split-atoms-begin}
\For{$\alpha \mapsto \alpha'$ \textbf{in} $\Delta$}\label{line:insert-rule-split-atoms-loop}
  \State $\mathit{owner}[\alpha'] \gets \mathit{owner}[\alpha]$\label{line:owner-source-r-prime}
  \For{$\mathit{source} \mapsto \mathit{bst}$ \textbf{in} $\mathit{owner}[\alpha]$}
    \State $r' \gets \mathit{bst}.\mathsf{highest\_priority\_rule}()$\label{line:insert-rule-get-r-prime}
    \State $\rlabel{\rlink{r'}} \gets \rlabel{\rlink{r'}} \cup \{\alpha'\}$\label{line:update-atoms}
  \EndFor
\EndFor\label{line:insert-rule-split-atoms-end}
\For{$\alpha$ \textbf{in} $\semantics{r}$}\label{line:insert-rule-atom-loop-begin}
  \State $r' \gets \mathbf{null}$
  \State $\mathit{bst} \gets \rowner{\alpha}{\rsource{r}}$\label{line:insert-rule-get-bst}
  \If{$\mathbf{not}\ \mathit{bst}.\mathsf{is\_empty}()$}
    \State $r' \gets \mathit{bst}.\mathsf{highest\_priority\_rule}()$\label{line:insert-rule-get-highest-priority-rule}
  \EndIf
  \If{$r' = \mathbf{null}\ \mathbf{or}\ \rpriority{r'} < \rpriority{r}$} \label{line:highest-priority-rule-check-begin}
    \State $\rlabel{\rlink{r}} \gets \rlabel{\rlink{r}} \cup \{\alpha\}$\label{line:update-r-guard}
    \If{$r' \not= \mathbf{null}\ \mathbf{and}\ \rlink{r} \not= \rlink{r'}$}
      \State $\rlabel{\rlink{r'}} \gets \rlabel{\rlink{r'}} - \{\alpha\}$
    \EndIf\label{line:update-r-prime-guard}
  \EndIf\label{line:highest-priority-rule-check-end}
  \State $\mathit{bst}.\mathsf{insert}(r)$\label{line:owner-source-r}
\EndFor\label{line:insert-rule-atom-loop-end}
\EndProcedure \label{line:insert-rule-end}
\end{algorithmic}
\caption{Inserts rule $r$ into a forwarding table.}
\label{alg:insert-rule}
\end{algorithm}

\subsubsection{Edge labelling when inserting a rule}
\label{subsubsec:insert}

We now explain how the \textproc{Insert\_Rule} procedure in~\Cref{alg:insert-rule} works. The algorithm starts by calling \textproc{Create\_Atoms$^+$} (\cref{line:insert-rule-split-atoms-begin}) that accomplishes the same as \textproc{Create\_Atoms} from~\cref{subsec:atom-representation} except that \textproc{Create\_Atoms$^+$} also returns $\Delta$, a set of \emph{delta-pairs}, as explained next. Each delta-pair in $\Delta$ is of the form $\alpha \mapsto \alpha'$ where $\alpha$ and $\alpha'$ are atoms. The intuition is that the half-closed interval previously represented by $\alpha$ needs to be now represented by two atoms instead, namely $\alpha$ and $\alpha'$. We call this \emph{atom splitting}. In a nutshell, this splitting provides an efficient mechanism for incrementally refining the precision of our abstract domain. This incremental abstraction refinement allows us to precisely and efficiently represent all Boolean combinations of rules in the network (see also~\cref{sec:intro}).

To illustrate the splitting of atoms, let $r_M$ be a new medium-priority rule to be inserted into~\Cref{table:forwarding-rules} such that $\rpriority{r_L} < \rpriority{r_M} < \rpriority{r_H}$. Assume $r_M$'s IP prefix is $0.0.0.8/30$; hence, $\rinterval{r_M} = [8 : 12)$. If $\bounds$ is the binary search subtree in~\Cref{fig:bst} consisting of undashed edges, then $\CreateAtomsPlus{r_M}$ returns a single delta-pair, namely $\Delta = \{\atom{0} \mapsto \atom{4}\}$, where $\atom{0}$ is the atom identifier denoting the atom $[\MIN : 10)$ before $r_M$ has been inserted, and $\atom{4}$ is a new atom identifier, depicted as a dashed leaf in~\Cref{fig:bst}. Here, $\Delta = \{\atom{0} \mapsto \atom{4}\}$ means that the existing atom $[\MIN : 10)$ needs to be split into $\atom{0} = [\MIN : 8)$ and $\atom{4} = [8 : 10)$. Note that there are always at most two delta-pairs in $\Delta$. Thus, since $\abs{\Delta} \leq 2$, we can effectively update the atom representation of forwarding rules in an incremental manner.

The splitting of atoms is effectuated by updating the labels for some links in the single-edged graph that represents the flow in the entire network (\cref{line:update-atoms}). To quickly determine these links, we exploit the highest-priority matching mechanism of packets. For this purpose, we use the array of hash tables, $\mathit{owner}$: it associates an atom $\alpha$ and $\mathit{source}$ node with a binary search tree $\mathit{bst}$ such that $\mathit{bst}.\mathsf{highest\_priority\_rule}()$ determines the next hop from $\mathit{source}$ of an $\alpha$-packet (\cref{line:insert-rule-get-r-prime}). Since $\abs{\Delta} \leq 2$, the doubly nested loop (\cref{line:insert-rule-split-atoms-loop}--\ref{line:insert-rule-split-atoms-end}) runs at most twice. For each delta-pair $\alpha \mapsto \alpha'$ in $\Delta$, the array of hash tables is updated so that $\mathit{owner}[\alpha']$ is a copy of $\mathit{owner}[\alpha]$ (\cref{line:owner-source-r-prime}). Therefore, since $r' \in \rowner{\alpha}{\rsource{r'}}$ holds for the existing atom $\alpha$, it follows that $r' \in \rowner{\alpha'}{\rsource{r'}}$ holds for the new atom $\alpha' \in \semantics{r'}$, thereby maintaining the invariant of the $\mathit{owner}$ array of hash tables (\cref{subsec:algorithm}). We adjust the labels accordingly (\cref{line:update-atoms}). The remainder of \Cref{alg:insert-rule} (\cref{line:insert-rule-atom-loop-begin}--\ref{line:insert-rule-atom-loop-end}) reassigns atoms based on the priority of the rule that `owns' each atom, as explained next.

The algorithm continues by iterating over all atoms that collectively represent $r$'s IP prefix (\cref{line:insert-rule-atom-loop-begin}), possibly including the newly created atom(s) in $\Delta$ (see previous paragraphs). For each such atom $\alpha$ in $\semantics{r}$, we find the highest-priority rule $r'$ (\cref{line:insert-rule-get-highest-priority-rule}) that determines the flow of an $\alpha$-packet at the node $\rsource{r}$ into which rule $r$ is inserted. We say such a rule $r'$ \emph{owns} $\alpha$. If no such rule exists or its priority is lower than $r$'s (\cref{line:highest-priority-rule-check-begin}), we assign $\alpha$ to the set of atoms that determine which network traffic can flow along the link of $r$ (\cref{line:update-r-guard}--\ref{line:update-r-prime-guard}), i.e. $\rlabel{\rlink{r}}$. Finally, we insert $r$ into the binary search tree for atom $\alpha$ and node $\rsource{r}$ (\cref{line:owner-source-r}), irrespective of which rule owns atom $\alpha$.

\subsubsection{Edge labelling when removing a rule}
\label{subsubsec:remove}

\Cref{alg:remove-rule} removes a rule $r$ from a forwarding table. Similar to~\Cref{alg:insert-rule}, \Cref{alg:remove-rule} iterates over all atoms $\alpha$ that are needed to represent $r$'s IP prefix (\cref{line:remove-rule-begin}). For each such atom $\alpha$, it retrieves the $\mathit{bst}$ that is specific to the node from which $r$ should be removed (\cref{line:remove-rule-bst}). After finding the highest-priority rule $r'$ in $\mathit{bst}$ (\cref{line:remove-rule-prime}), it removes $r$ from $\mathit{bst}$ (\cref{line:remove-rule-r}). If $r'$ equals $r$ (\cref{line:remove-rule-check-prime}), we need to remove $\alpha$ from the label of $\rlink{r}$ because the rule that needs to be removed, $r$, owns atom $\alpha$ (as described in~\cref{subsubsec:insert}). In addition, we may need to transfer the ownership of the next higher priority rule (\cref{line:remove-rule-check-empty-begin}-\ref{line:remove-rule-check-empty-end}). 

We remark that after the removal of a rule, it may be that some (at most two) atoms are not needed any longer. In this case, akin to garbage collection, we could reclaim the unused atom identifier(s). This `garbage collection' mechanism is omitted from \Cref{alg:remove-rule}.

\begin{algorithm}[t]
\begin{algorithmic}[1]
\Procedure{Remove\_Rule}{$r$}
\For{$\alpha$ \textbf{in} $\semantics{r}$}\label{line:remove-rule-begin}
  \State $\mathit{bst} \gets \rowner{\alpha}{\rsource{r}}$\label{line:remove-rule-bst}
  \State $r' \gets \mathit{bst}.\mathsf{highest\_priority\_rule}()$\label{line:remove-rule-prime}
  \State $\mathit{bst}.\mathsf{remove}(r)$\label{line:remove-rule-r}
  \If{$r' = r$}\label{line:remove-rule-check-prime}
    \State $\rlabel{\rlink{r}} \gets \rlabel{\rlink{r}} - \{\alpha\}$
    \If{$\mathbf{not}\ \mathit{bst}.\mathsf{is\_empty}()$}\label{line:remove-rule-check-empty-begin}
      \State $r'' \gets \mathit{bst}.\mathsf{highest\_priority\_rule}()$
      \State $\rlabel{\rlink{r''}} \gets \rlabel{\rlink{r''}} \cup \{\alpha\}$
    \EndIf\label{line:remove-rule-check-empty-end}
  \EndIf
\EndFor\label{line:remove-rule-end}
\EndProcedure
\end{algorithmic}
\caption{Removes rule $r$ from a forwarding table.}
\label{alg:remove-rule}
\end{algorithm}

\subsubsection{Complexity analysis}

We now show that each rule update is amortized linear time in the number of affected atoms and logarithmic in the maximum number of overlapping rules in a single switch. While in the worst-case there are as many atoms as there are rules in the network, our experiments (\cref{sec:experiments}) show that the number of atoms is typically much smaller in practice, explaining why we found Delta-net to be highly efficient in the vast majority of cases.

\begin{theorem}[Asymptotic worst-case time complexity]
To insert or remove a total of $R$ rules, \Cref{alg:insert-rule}~and~\ref{alg:remove-rule} have a $O(RK \log M)$ worst-case time complexity where $K$ is the number of atoms and $M$ is the maximum number of overlapping rules per network switch.
\label{theorem:complexity-analysis}
\end{theorem}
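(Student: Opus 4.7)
The plan is to establish an $O(K \log M)$ amortized bound on the cost of a single call to either \textproc{Insert\_Rule} or \textproc{Remove\_Rule}, and then multiply by $R$ to obtain the stated bound. I would start with Algorithm~\ref{alg:remove-rule}, which is structurally simpler: its single \textbf{for} loop iterates over the $|\sem{r}| \leq K$ atoms in $r$'s interval, and by the invariant of $\mathit{owner}$, each balanced BST $\rowner{\alpha}{\rsource{r}}$ contains at most $M$ rules and thus admits $O(\log M)$ highest-priority lookup, removal, and set-style operations. Label updates on $\rlabel{\cdot}$ cost $O(\log K)$ using a balanced BST representation, which I would subsume into the $O(\log M)$ factor under the standing assumption that $K$ and $M$ are polynomially related (or else replace $\log M$ by $\log M + \log K$ throughout). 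Summing over atoms yields $O(K \log M)$ per removal.

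For \textproc{Insert\_Rule} I would analyze its three blocks in turn. First, \textproc{Create\_Atoms$^+$} performs at most two inserts into $\bounds$ at $O(\log K)$ each, returning a set $\Delta$ of size at most two, as established in \cref{subsec:atom-representation}. Second, the atom loop on lines~\ref{line:insert-rule-atom-loop-begin}--\ref{line:insert-rule-atom-loop-end} iterates over $|\sem{r}| \leq K$ atoms, and mirrors \textproc{Remove\_Rule}: the BST lookup on line~\ref{line:insert-rule-get-bst}, the highest-priority query on line~\ref{line:insert-rule-get-highest-priority-rule}, the two set updates on lines~\ref{line:update-r-guard}--\ref{line:update-r-prime-guard}, and the insert on line~\ref{line:owner-source-r} each cost $O(\log M)$, contributing $O(K \log M)$. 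Third, the splitting loop on lines~\ref{line:insert-rule-split-atoms-begin}--\ref{line:insert-rule-split-atoms-end} runs at most twice since $|\Delta| \leq 2$; per delta-pair $\alpha \mapsto \alpha'$, line~\ref{line:owner-source-r-prime} copies $\mathit{owner}[\alpha]$ and the inner loop walks its entries, each costing $O(\log M)$ to retrieve the highest-priority owner and update its label.

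The main obstacle is bounding the cost of the splitting loop, since $|\mathit{owner}[\alpha]|$ counts one entry per source node that owns some rule whose interval contains $\alpha$, and is not immediately bounded by $|\sem{r}|$. I would attack this by observing that an entry exists in $\mathit{owner}[\alpha]$ only when witnessed by an existing rule $r'$ with $\alpha \in \sem{r'}$; charging the per-entry $O(\log M)$ work against that witness rule and using the invariant that each atom sits in at most $M$ rules per source shows that the splitting loop does $O(K \log M)$ work per insertion, since a split event affects only the (at most two) newly-created atoms. To make the copy on line~\ref{line:owner-source-r-prime} affordable, I would treat the bsts as shared by reference (the contents of $\mathit{owner}[\alpha]$ and $\mathit{owner}[\alpha']$ are identical immediately after splitting, so no deep copy is required), leaving only the pointer-level shallow copy whose cost is already absorbed into the $O(|\mathit{owner}[\alpha]| \log M)$ traversal.

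Combining the three blocks yields $O(K \log M)$ per insertion, and the same bound per removal, so summing over the $R$ total updates gives the claimed $O(RK \log M)$ worst-case complexity.
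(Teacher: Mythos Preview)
Your analysis of \textproc{Remove\_Rule} and of the main atom loop in \textproc{Insert\_Rule} (lines~\ref{line:insert-rule-atom-loop-begin}--\ref{line:insert-rule-atom-loop-end}) is essentially the paper's argument. The gap is in the splitting loop. The quantity $|\mathit{owner}[\alpha]|$ counts the distinct \emph{source} nodes that currently hold some rule containing $\alpha$; it is bounded by the number of switches (hence by $R$), not by $K$ or $M$, so your charging argument does not deliver an $O(K\log M)$ bound per insertion---a single split can touch $\Theta(R)$ sources at once. Your shallow-copy suggestion also fails on correctness grounds: immediately after $\alpha$ is split into $\alpha,\alpha'$, the rule $r$ being inserted (and later rules) may lie in only one of the two resulting atoms, so $\rowner{\alpha}{\rsource{r}}$ and $\rowner{\alpha'}{\rsource{r}}$ must be free to diverge; sharing their BSTs by reference would violate the $\mathit{owner}$ invariant.

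The paper avoids this by bounding the splitting cost \emph{in aggregate} over all $R$ insertions rather than per call: there are at most $K$ splits in total (each creates one new atom), and each deep-copies owner information of size at most $R$, giving $O(RK)$ for lines~\ref{line:insert-rule-split-atoms-begin}--\ref{line:insert-rule-split-atoms-end} summed over all insertions. Adding the $O(RK\log M)$ contribution of the main loop yields the theorem. Your outline is repaired by switching to this aggregate argument for the splitting phase instead of trying to force it into a per-call envelope. Two minor points: the paper treats the label updates as amortized $O(1)$ (single-bit operations on dynamic bitsets indexed by atom identifier), so no assumption relating $K$ and $M$ is needed; and that assumption is in any case unwarranted, since $K$ scales with the number of distinct prefix endpoints network-wide whereas $M$ is a per-switch overlap bound.
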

\begin{proof}
The proof can be found in Appendix~\ref{appendix:proof-of-complexity-analysis}.
\end{proof}

The space complexity of Delta-net is $O(RK)$ where $R$ and $K$ are the total number of rules and atoms, respectively. We recall that $K$ is significantly smaller than $R$. We also experimentally quantify memory usage (\cref{sec:experiments}).

\subsection{Revisited: functional design goals}
\label{subsec:revisited-design-goals}

From a functionality perspective, recall that our work is guided by three design goals (\cref{subsec:design-goals}). In this subsection, we explain how Delta-net achieves these goals.

\paragraph{API for persistent network-wide flow information.} Delta-net provides an exact representation of all flows through the entire network. For this purpose, Delta-net maintains the atom labels for every edge in the graph that represents the network topology. From a programmer's perspective, this edge-centric information can be always retrieved in constant-time through $\rlabel{\mathit{link}}$ where $\mathit{link}$ is a pair of nodes in this graph. This way, our API allows a programmer to answer reachability questions about packet flow through the entire network irrespective of the rule that has been most recently inserted or removed. This makes Delta-net different from Veriflow~\cite{KZZCG2013}. Architecturally, our generalization is achieved by decoupling packet equivalence classes (whether affected by a rule update or not) from the construction of their corresponding forwarding graphs, cf.~\cite{KZZCG2013}.

\vspace{-1em}
\paragraph{Incremental network verification via delta-graphs.} Similar to Veriflow~\cite{KZZCG2013}, Delta-net can build forwarding graphs, if necessary, to check reachability properties that are suitable for incremental network verification, such as checking the existence of forwarding loops each time a rule is inserted or removed. In fact, the concept of atoms has as consequence a convenient algorithm for computing a compact edge-labelled graph, called \emph{delta-graph}, that represents all such forwarding graphs. We can generate a delta-graph as a by-product of~\Cref{alg:insert-rule} for all atoms $\alpha$ whose owner changes (\cref{line:highest-priority-rule-check-begin}-\ref{line:highest-priority-rule-check-end}); similarly for~\Cref{alg:remove-rule}. If so desired, multiple rule updates may be aggregated into a delta-graph.

\begin{algorithm}[t]
\begin{algorithmic}[1]
\For{$k, i, j$ \textbf{in} $V$} \Comment{Triple nested loop}
  \State $\rlabel{i, j} \gets \rlabel{i, j}\,\cup\,(\rlabel{i, k}\,\cap\,\rlabel{k, j})$
\EndFor
\end{algorithmic}
\caption{Compute all-pairs reachability of all atoms.}
\label{alg:all-pairs-reachability}
\end{algorithm}

\paragraph{Easier checking of other reachability properties.} Delta-net's design provides a lattice-theoretic foundation for transferring known algorithmic techniques to the field of network verification. For example,~\Cref{alg:all-pairs-reachability} adapts the Floyd–Warshall algorithm to compute the transitive closure of packet flows between all pairs of nodes in the network. Note that our adaptation interchanges the usual maximum and addition operators with union and intersection of sets of atoms, respectively. This way,~\Cref{alg:all-pairs-reachability} process multiple packet equivalence classes in each hop.\footnote{A routine proof by induction on $k$ (the outermost loop) shows that~\Cref{alg:all-pairs-reachability} computes the all-pairs reachability of every $\alpha$-packet.} Veriflow has not been designed for such computations, and~\Cref{alg:all-pairs-reachability} illustrates how Delta-net facilitates use cases beyond the usual reachability checks, cf.~\cite{KZZCG2013,KCZVMcKW2013,YL2013}. This algorithm could be run either on the edge-labelled graph that represents the entire network or only its incremental version in form of a delta-graph (see previous paragraph).

While decision problems such as all-pairs reachability have a higher computational complexity (e.g.,~\Cref{alg:all-pairs-reachability}'s complexity is $O(K\abs{V}^3)$ where $K$ and $V$ is the number of atoms and nodes in the edge-labelled graph, respectively), they are relevant and useful during pre-deployment testing of SDN applications, as demonstrated by recent work on Datalog-based network verification, e.g.~\cite{FFPWGMM2015,LBGJV2015}. The fact that our design makes it possible to verify network-wide reachability by intersecting or taking the union of sets of atoms~\cite{YL2013} is also relevant for scenarios that involve many or all packet equivalence classes at a time, such as ``what if'' queries, network failures, and traffic isolation properties, e.g.~\cite{AFGJZSW2014,FKMST2015}.


\section{Performance evaluation}
\label{sec:experiments}

In this section, we experimentally evaluate our implementation of Delta-net (\cref{subsec:implementation}) on a diverse range of data sets (\cref{subsec:data-sets}) that are significantly larger than previous ones (see also Appendix~\ref{appendix:data-sets}). Our experiments provide strong evidence that Delta-net significantly advances the field of real-time network verification (\cref{subsec:experimental-results}).

\subsection{Implementation}
\label{subsec:implementation}

We implemented~\Cref{alg:insert-rule}~and~\ref{alg:remove-rule} in C++14~\cite{CPP14}. Our implementation is single-threaded and comprises around 4,000 lines of code that only depend on the C++14 standard library. In particular, we use the standard hashmap, balanced binary search tree and resizeable array implementations. We implement edge labels as customized dynamic bitsets, stored as aligned, dynamically allocated, contiguous memory. We detect forwarding loops via an iterative depth-first graph traversal.

We remark that while \Cref{alg:insert-rule}~and~\ref{alg:remove-rule} focus on handling IP prefix rules, our approach can be extended for other packet header fields. For non-wildcard (i.e. concrete) header fields, our implementation achieves this by encoding composite match conditions as separate nodes in the single edge-labelled graph. For example, if a switch $s$ contains rules that can match three input ports, we encode $s$ as three separate nodes in the edge-labelled graph. It is for this reason that we report the number of graph nodes rather than the number of switches when describing our data sets in the next subsection.

\subsection{Description of data sets}
\label{subsec:data-sets}

Our data sets are publicly available~\cite{Deltanet} and can be broadly divided into two classes: data sets derived from the literature (\cref{subsubsec:synthetic}), and data sets gathered from an ONOS SDN  application (\cref{subsubsec:ONOS-SDN-IP}). Both are significant as the former avoids experimental bias, whereas the latter increases the realism of our experiments. To achieve reproducibility, we organize our data sets as text files in which each line denotes an \emph{operation}: an insertion or removal of a rule. So all operations can be easily replayed.

\Cref{table:data-sets} summarizes our data sets in terms of three metrics.
The second and third column in~\Cref{table:data-sets} correspond to the maximum number of nodes and links in the edge-labelled graph, respectively. We recall that the number of nodes is proportional to the number of ports and switches in the network (\cref{subsec:implementation}). The total number of operations is reported in the last column. Note that most of our data sets are significantly larger than previous ones, cf.~\cite{KZZCG2013,C2014,KCZVMcKW2013,YL2013} (see also Appendix~\ref{appendix:data-sets}). Next, we describe the main features of our data sets.

\begin{table}
\centering
\scalebox{0.76}{
\begin{tabular}{l|r|r|r}
  \toprule
    \multicolumn{1}{c|}{\small\textbf{Data set}} & \small\textbf{Nodes} & \small\textbf{Max Links} & \small\textbf{Operations} \\
    \midrule
    Berkeley          & $23$  & $252$     & $25.6 \times 10^6$        \\
    INET              & $316$ & $40,770$ & $249.5 \times 10^6$        \\
    RF~1755           & $87$  & $2,308$  & $67.5 \times 10^6$       \\
    RF~3257           & $161$ & $9,432$ & $149.0 \times 10^6$        \\
    RF~6461           & $138$ & $8,140$  & $150.0 \times 10^6$      \\
    Airtel~1          & $68$  & $260$     & $14.2 \times 10^6$ \\
    Airtel~2          & $68$  & $260$    & $505.2 \times 10^6$ \\
    4Switch           & $12$  & $16$     & $1.12 \times 10^6$ \\
  \bottomrule
\end{tabular}
}
\caption{Data sets used for evaluating Delta-net.}
\label{table:data-sets}
\end{table}

\subsubsection{Synthetic data sets}
\label{subsubsec:synthetic}

To avoid experimental bias, our experiments purposefully include data sets from the literature~\cite{ZZYJJLMV2014,NTRW2016} that feature network topologies from the UC Berkeley campus and the Rocketfuel (RF) project~\cite{SMW2002}, namely ASes 1755,~1239,~6257~and~6461. Note that the RF topologies in~\cite{NTRW2016} correspond to those used by~\cite{HVSBFTF2015,Frenetic,T2016}. For each of these five network topologies, we generate forwarding rules following the same mechanism as in~\cite{ZZYJJLMV2014}, namely: we gather IP prefixes from over a half a million of real-world BGP updates collected by the Route Views project~\cite{RouteViews}, and compute the shortest paths in a network topology~\cite{Libra}. For example, for the network topology RF~1239, this results in the INET data set~\cite{ZZYJJLMV2014}, a synthetic wide-area backbone network that contains approximately $300$ routers, $481$ thousand subnets and $125$ million IPv4 forwarding rules. We modify the data sets so that rules are inserted with a random priority. After rules have been inserted, we remove them in random order. The first five rows in \Cref{table:data-sets} show the resulting data sets, which contain up to 125 million rules. Due to rule removals, the total number of operations is twice the maximum number of rules. Collectively, the Berkeley, INET and RF~1755,~3257~and~6461 data sets comprise around 640 million rule operations. Next, we explain the remaining three data sets in~\Cref{table:data-sets}.

\subsubsection{SDN-IP Application}
\label{subsubsec:ONOS-SDN-IP}

In addition to synthetic data sets (\cref{subsubsec:synthetic}), we run experiments with ONOS~\cite{ONOS,ONOSDeployments}, an open SDN platform used by sizeable operator networks around the globe~\cite{ONOS,ONOSDeployments}.

\begin{figure}[b]
\centering
\includegraphics[scale=.89]{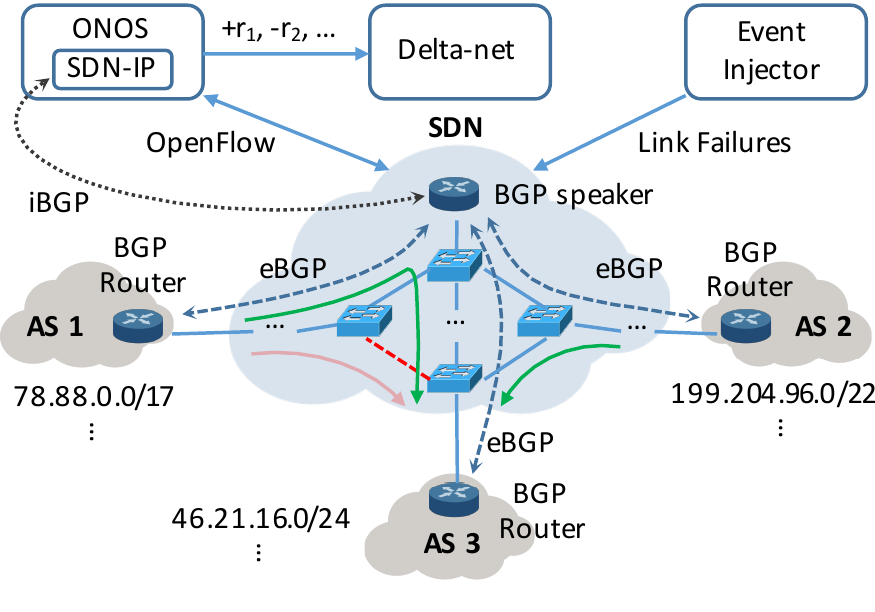}
\caption{Experimental setup with SDN-IP application.}
\label{fig:sdn-ip}
\end{figure}

To obtain a relevant and realistic experimental setup, we run SDN-IP~\cite{LHKMKAWB2013,SDNIP}, an important ONOS application that allows an ONOS-controlled network to interoperate with external autonomous networks (ASes). This interoperability is achieved as follows (\Cref{fig:sdn-ip}). Inside the ONOS-controlled network reside \emph{Border Gateway Protocol} (BGP) speakers (in our experimental setup there is exactly one internal BGP speaker) that use eBGP to exchange BGP routing information with the border routers of adjacent external ASes. This information, in turn, is propagated inside the ONOS-controlled network via iBGP. As sketched in the upper half of~\Cref{fig:sdn-ip}, SDN-IP listens to these iBGP messages and requests ONOS to dynamically install IP forwarding rules such that packets destined to an external AS arrive at the correct border router. In doing so, SDN-IP sets the priority of rules according to the longest prefix match where rules with longer prefix lengths receive higher priority. For each rule insertion and removal (depicted by $+r_1$ and $-r_2$ in~\Cref{fig:sdn-ip}), Delta-net checks the resulting data plane.

For our experiments, we run SDN-IP in a single ONOS instance. We use Mininet~\cite{LHMcK2010} to emulate a network of sixteen Open vSwitches~\cite{PPKJZRGWS2015}, configured according to the Airtel network topology (AS~9498)~\cite{KNFBR2011}. We connect each of these OpenFlow-compliant switches~\cite{OpenFlow} to an external border router that we emulate using Quagga~\cite{Quagga}. We configure Quagga such that each border router advertises one hundred IP prefixes, which we randomly select from over half a million real-world IP prefixes gathered from the Route Views project~\cite{RouteViews}, resulting in a total of $1,600$ unique (but possibly overlapping) IP prefixes.

Our experiments in~\cref{subsubsec:rule-insertions-and-removals} exploit the fact that SDN-IP relies on ONOS to reconfigure the OpenFlow switches when parts of the network fail. Since network failures happen frequently~\cite{BK2014} and pose significant challenges for real-time data plane checkers~\cite{KCZVMcKW2013,KZZCG2013}, we can generate interesting data sets by systemically failing links, controlled by the `Event Injector' process in the upper right half of~\Cref{fig:sdn-ip}. In particular, the Airtel~1 data set contains the rule insertions and removals triggered by failing a single inter-switch link at a time, recovering each link before failing the next one. Such a link failure (dashed red edge) is illustrated in the left half of~\Cref{fig:sdn-ip}, causing ONOS to reconfigure the data plane so that a new path is established (green arrow on the left) that avoids the failed link, which caused disruption to earlier network traffic (red arrow). In the case of Airtel~2, we automatically induce all $2$-pair link failures (separately failing the first link and then the second one), including their recovery.

We also wanted to study a larger number of rules and IP prefixes, but were limited due to technical issues with ONOS. We worked around these limitations by using a 4-switch ring network. In this smaller ring topology, we configure each Quagga instance to advertise $5,000$ IP prefixes (rather than only $100$ IP prefixes as in the Airtel experiments), again randomly selected from the Route Views project~\cite{RouteViews}. We do not fail any links. Instead, we only collect the rules generated by SDN-IP, a process we repeat fourteen times with different IP prefixes. This workaround yields the 4Switch data set in~\Cref{table:data-sets}, comprising $1.12$ million rules. In contrast to the previously described data sets, all of the operations in the 4Switch data set are rule insertions.

\subsection{Experimental results}
\label{subsec:experimental-results}

Our experiments separately measure Delta-net's performance in checking individual rule updates (\cref{subsubsec:rule-insertions-and-removals}) and handling a ``what if'' scenario (\cref{subsec:checking-link-failures}). In both cases, at the cost of higher memory usage, Delta-net is more than $10\times$ faster than the state-of-the-art. We run our experiments on an Intel Xeon CPU with 3.47\,GHz and 94\,GB of RAM. Since our implementation is single-threaded (\cref{subsec:implementation}), we utilize only one out of the 12 available cores.

\subsubsection{Checking network updates}
\label{subsubsec:rule-insertions-and-removals}

To evaluate Delta-net's performance with respect to rule insertions and removals, we build the delta-graph (\cref{subsec:revisited-design-goals}) for each operation, and find in it all forwarding loops, a common network-wide invariant~\cite{KVM2012,KCZVMcKW2013,YL2013,KZZCG2013,ZZYJJLMV2014}. We process the rules in each data set in the order in which they appear in the data sets (\cref{subsec:data-sets}).

\begin{table*}
\centering
\scalebox{0.76}{
\begin{tabular}{l||r|r|r|r|r|r|r|r}
  \toprule
  & \textbf{Berkeley} & \textbf{INET}  & \textbf{RF~1755} & \textbf{RF~3257} & \textbf{RF~6461} &  \textbf{Airtel~1} & \textbf{Airtel~2} & \textbf{4Switch} \\
    \midrule
  Total number of atoms
  & $668,520$ & $563,480$ & $726,535$ & $726,535$ & $726,535$ & $2,799$ & $2,799$ & $443,443$  \\  
 
  Median rule processing time
  & $4 \,\mu s$ & $5 \,\mu s$ & $4\,\mu s$ & $5\,\mu s$ & $5\,\mu s$  &  $2\,\mu s$ & $1\,\mu s$ & $4\,\mu s$  \\
  
  Average rule processing time
  & $5 \,\mu s$ & $41 \,\mu s$ & $11\,\mu s$ & $22\,\mu s$ & $20\,\mu s$ & $3\,\mu s$ & $3\,\mu s$ & $5\,\mu s$   \\
  
  Percentage $< 250\, \mu s$ & $99.9\%$ & $98.5\%$ & $99.8\%$ & $99.6\%$ & $99.7\%$ & $99.9\%$ & $99.9\%$ & $99.9\%$ \\
  \bottomrule
\end{tabular}
}
\caption{Experimental results using Delta-net, measuring rule insertions and removals.}
\label{table:experimental-results}
\end{table*}

\begin{figure}
\centering
\includegraphics{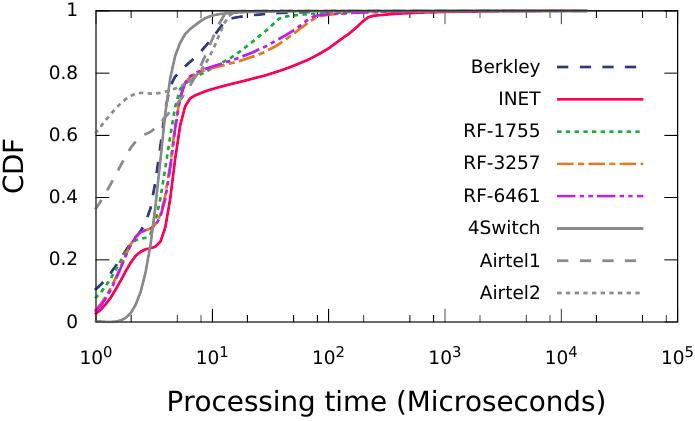}
\caption{CDF of combined time (microseconds) for processing a rule update and checking for forwarding loops.}
\label{fig:cdf}
\end{figure}

\Cref{table:experimental-results} summarizes our experimental results for measuring the checking of rule insertions and removals. The first row in~\Cref{table:experimental-results} shows that the total number of atoms is much smaller than the total number of rules in the network (recall \Cref{table:data-sets}), suggesting a significant degree of commonality among IP prefix rules that atoms effectively exploit. Furthermore, for all data sets, the median and average rule processing time is less than $5$ and $41$ microseconds, respectively, which includes the checking of forwarding loops. On closer inspection, as shown in the last row of~\Cref{table:data-sets}, Delta-net processes rule updates and checks for the existence of forwarding loops in less than $250$ microseconds for at least $98.5\%$ of cases. The combined time for processing a rule update and finding \emph{all} forwarding loops in the corresponding delta-graph (\cref{subsec:revisited-design-goals}) is visualized by the cumulative density function (CDF) in~\Cref{fig:cdf}. It shows that the INET data set~\cite{ZZYJJLMV2014} (solid red line) is one of the more difficult ones for Delta-net. We remark that Delta-net's memory usage never exceeds the available memory on our machine (Appendix~\ref{appendix:memory-usage}).

Our measurements are significant because earlier experiments with Veriflow~\cite{KZZCG2013} result in an average verification time of $380$ microseconds, whereas Delta-net verifies rule insertions and removals in often tens of microseconds, and $41$ microseconds on average even on the largest data set, INET. This comparison is meaningful because our data sets are significantly larger than previous ones~\cite{KZZCG2013,C2014,KCZVMcKW2013,YL2013}. Moreover, two of our data sets (Airtel~1~and~2) are derived from a real-world software-defined networking application while causing an extensive number of link failures in the network, which were previously shown to lead to longer verification times~\cite{KCZVMcKW2013,KZZCG2013}. Our experiments therefore provide strong evidence that Delta-net can be at least one order of magnitude faster compared to Veriflow~\cite{KZZCG2013}. Since neither Veriflow's implementation (or its algorithm) nor any of the data sets used for its experimental evaluation are publicly available, and neither its time nor space complexity is specified, we further quantify the differences between Delta-net and Veriflow by re-implementing a consistent interpretation of Veriflow, as described next.

Our re-implementation of Veriflow, which we call \emph{Veriflow-RI}, is not intended to be a full-feature copy of Veriflow, but rather a re-implementation of their core idea to enable an honest comparison with Delta-net. Specifically, Veriflow-RI is designed for matches against a single packet header field. This explains why Veriflow-RI uses a one-dimensional trie data structure in which every node has at most two children (rather than three~\cite{KZZCG2013}). We optimize the computation of equivalence classes and construction of forwarding graphs. Note that these optimizations may not be possible in the original Veriflow implementation with its ternary trie data structure, and Veriflow-RI may therefore be faster than Veriflow~\cite{KZZCG2013}. We remark that Veriflow-RI's space complexity is linear in the number of rules in the network, whereas its time complexity is quadratic, rather than quasi-linear as in the case of Delta-net (\Cref{theorem:complexity-analysis}).


While Delta-net is only approximately $4\times$ faster than Veriflow-RI on the Airtel data set, on the INET data set, Delta-net is approximately $6\times$ faster than Veriflow-RI. This gap widens on the RF~3257~and~6461 data sets where Delta-net is approximately $7\times$ faster than Veriflow-RI. In turn, however, Veriflow-RI consumes $5-7\times$ less memory than Delta-net (Appendix~\cref{appendix:memory-usage}).

It is therefore natural to ask whether this trade-off in space and time is worth it. Next, we answer this question affirmatively by showing that Delta-net can check properties for which Veriflow often times out. This difference in run-time performance is due to the fact that Delta-net incrementally maintains flow information of every packet in the entire network, whereas Veriflow recomputes the forwarding graph for each affected equivalence class. What is remarkable is that Delta-net achieves this extra bookkeeping without limiting the checking of individual network updates (see previous paragraph).

\vspace{-0.5em}
\subsubsection{Beyond network updates}
\label{subsec:checking-link-failures}

We show how Delta-net can go beyond traditional data plane checks per network update. To do so, we consider the following question, which was previously posed by~\cite{KZZCG2013}, as an exemplar of a ``what if'' query: What is the fate of packets that are using a link that fails? We interpret their question to mean that Veriflow has to construct forwarding graphs for all packet equivalence classes that are affected by a link failure. This is known to be a difficult task for Veriflow since it requires the construction of at least a hundredfold more forwarding graphs compared to checking a rule insertion or removal (\cref{subsubsec:rule-insertions-and-removals}). Here, our experiment quantifies how much Delta-net gains by incrementally transforming a single-edge labelled graph instead of constructing multiple forwarding graphs.

For our experiments, we generate a consistent data plane from all the rule insertions in the five synthetic and 4Switch data sets in~\Cref{table:data-sets}, respectively. And in the case of Airtel, we extract a consistent data plane snapshot from ONOS. The total number of resulting rules in each data plane is shown in the second column of~\Cref{table:delta-net-vs-veriflow-ri}. For all of these seven data planes, we answer which packets and parts of the network are affected by a hypothetical link failure. The verification task therefore is to represent via one or multiple graphs all flows of packets through the network that would be affected when a link fails. The third column in~\Cref{table:data-sets} (number of links) corresponds to the number of queries we pose, except for the new Airtel data plane snapshot where we pose 158 queries.

Since Delta-net already maintains network-wide packet flow information, we expect it to perform better than Veriflow-RI.\footnote{Recall from previous experiments (\cref{subsubsec:rule-insertions-and-removals}), Delta-net's extra bookkeeping poses no performance problems for checking network updates.} The third and fourth column in~\Cref{table:delta-net-vs-veriflow-ri} quantify this performance gain by showing the average query time of Veriflow-RI and Delta-net, respectively.  On three data planes, Veriflow-RI exceeds the total run-time limit of 24 hours, whereas the longest running Delta-net experiment takes a total of $3.2$ hours. When these time outs in Veriflow-RI occur, we report its incomplete average query time $t$ as `$t^{\dagger}$'. We find that Delta-net is usually more than $10\times$ faster than Veriflow-RI (even if Delta-net checks for forwarding loops, as reported in the last column). Since Delta-net is very fast in maintaining the flow of packets, the difference between the last two columns in~\Cref{table:delta-net-vs-veriflow-ri} shows that Delta-net's processing time is dominated by the property check (here, forwarding loops). In contrast to Delta-net, Veriflow's processing time is reportedly dominated by the construction of forwarding graphs~\cite{KZZCG2013}.


\begin{table}[b]
\centering
\scalebox{0.76}{
\begin{tabular}{l|r|r|r|r}
  \toprule
 \multicolumn{1}{c|}{\textbf{Data plane}}                     & \multicolumn{1}{|c|}{\textbf{Rules}} & \multicolumn{3}{|c}{\textbf{Average query time ($ms$)}} \\
                      &                & \small Veriflow-RI & \small Delta-net & \small $+\,$Loops \\
    \midrule
    Berkeley          & $12,817,902$  & $3,073.0$     & $\mathbf{4.7}$  & $93.3$ \\
    INET              & $124,733,556$ & $29,117.5^{\dagger}$      & $\mathbf{0.7}$  & $2,888.6$ \\
    RF~1755           & $33,732,869$  & $8,100.6$     & $\mathbf{1.3}$  & $897.4$ \\
    RF~3257           & $74,492,920$  & $17,645.3^{\dagger}$      & $\mathbf{1.0}$  & $2.6$ \\
    RF~6461           & $75,005,738$  & $17,594.5^{\dagger}$      & $\mathbf{0.4}$  & $0.4$ \\
    Airtel            & $38,100$      & $4.5$        & $\mathbf{0.04}$ & $2.3$ \\
    4Switch           & $1,120,000$   & $433.4$      & $\mathbf{21.1}$ & $128.1$ \\
 \bottomrule
\end{tabular}
}
\caption{Experimental results for ``what if'' link failures.\vspace{-0.4em}}
\label{table:delta-net-vs-veriflow-ri}
\end{table}

\section{Related work}
\label{sec:related-work}

In this section, we discuss related works in the literature.


\vspace{-1em}
\paragraph{Stateful networks.} One of the earliest stateful network analysis techniques~\cite{CVPDR2012} proposes symbolic execution of OpenFlow applications using a simplified model of OpenFlow network switches. VeriCon~\cite{BBGIKSSV2014} uses an SMT solver to automatically prove the correctness of simple SDN controllers. FlowTest~\cite{FS2014} investigates relevant AI planning techniques. SymNet~\cite{SPNR2016} symbolically analyzes stateful middleboxes through additional fields in the packet header. Unlike~\cite{CVPDR2012}, BUZZ~\cite{FYTCS2016} adopts a symbolic model-based testing strategy~\cite{UPL2012} as a way to capture the state of forwarding devices. Most recent complexity results~\cite{VAPRSSS2016} are the first step towards a taxonomy of decision procedures in this research area. Real-time network verification techniques (see next paragraph) can be extended to check safety properties that depend on the state of the SDN controller~\cite{BZZMRW2014}.

\vspace{-1em}
\paragraph{Stateless networks.} The seminal work of Xie et al.~\cite{XZMZGHR2005} introduces stateless data plane checking to which Delta-net belongs. The research that emerged from~\cite{XZMZGHR2005} can be broadly divided into offline~\cite{YMSCCM2006,AMEE2009,JS2009,NBDFK2010,AA2010,MKACGK2011,SSYPG2012,KVM2012,MCD2015,FFPWGMM2015,LBGJV2015} and online~\cite{KZZCG2013,KCZVMcKW2013,YL2013} approaches. The offline approaches encode the problem into Datalog~\cite{FFPWGMM2015,LBGJV2015} or logic formulas that can be checked for satisfiability by constructing a Binary Decision Diagram~\cite{YMSCCM2006,AMEE2009} or calling an SAT/SMT solver~\cite{JS2009,NBDFK2010,AA2010,MKACGK2011,SSYPG2012,JBOK2014,MCD2015}. By contrast, all modern online approaches~\cite{KZZCG2013,KCZVMcKW2013,YL2013} partition in some way the set of all network packets. In particular, the partitioning scheme described in~\cite{KVM2012}, on which~\cite{KZZCG2013} is based, dynamically computes equivalence classes by propagating ternary strings in the network, whereas more recent work~\cite{KCZVMcKW2013,YL2013,NGRSG2016}, including ours, pre-compute network packet partitions prior to checking a verification condition. Our work could be used in conjunction with network symmetry reduction techniques~\cite{PBLRV2016}. Custom network abstractions can be very useful for restricted cases~\cite{GJVAM2016}. While potentially less efficient, our work is more general than~\cite{GJVAM2016}, and most closely related to~\cite{KZZCG2013,C2014,KCZVMcKW2013,YL2013,ZZYJJLMV2014,NGRSG2016}, which we discuss in turn. The complexity of the most prominent of these works, including Veriflow~\cite{KZZCG2013} and NetPlumber~\cite{KCZVMcKW2013}, is summarized in work~\cite[Section II]{L2016} that is independent from ours.

Veriflow~\cite{KZZCG2013} constructs multiple forwarding graphs that may significantly overlap (\cref{subsec:example}). Our algorithm exploits this overlapping and transforms a single edge-labelled graph instead. Moreover, Veriflow relies on the fact that overlapping IP prefixes can be efficiently found using a trie data structure~\cite{KZZCG2013}. By contrast, atoms are generally not expressible as a single IP prefix. For example, atom $[0 : 10)$ in~\Cref{fig:half-closed-intervals} can only be represented by the union of at least two IP prefixes.

Chen~\cite{C2014} shows how to optimize Veriflow~\cite{KZZCG2013}, while retaining its core algorithm. Similar to~\cite{C2014}, we represent IP prefixes in a balanced binary search tree. Unlike~\cite{C2014}, however, our representation serves as a built-in index of half-closed intervals through which we address fundamental limitations of Veriflow (\cref{subsec:example}).

NetPlumber~\cite{KCZVMcKW2013} incrementally creates a graph that, in the worst case, consists of $R^2$ edges where $R$ is the number of rules in the network. In contrast to NetPlumber, Delta-net maintains a graph whose size is proportional to the number of links in the network, which is usually much smaller than $R$. Since the number of atoms tends to be much less than $R$ (\cref{sec:experiments}), Delta-net has an asymptotically smaller memory footprint than NetPlumber.

Yang and Lam~\cite{YL2013} propose a more compact representation of forwarding graphs that reduces the task of data plane checking to intersecting sets of integers. For the restricted, but common, case of checking IP forwarding rules, our algorithm is asymptotically faster than theirs. Our algorithm, however, does not find the unique minimal number of packet equivalence classes, cf.~\cite{YL2013}.

More recent work for stateless and non-mutating data plane verification~\cite{NGRSG2016} encodes a canonical form of ternary bit-vectors, and shows on small data sets with a few thousand rules that their encoding performs better than Yang and Lam~\cite{YL2013}'s algorithm. It would be interesting to repeat these experiments on our, significantly larger, data sets.

Finally, Libra~\cite{ZZYJJLMV2014} may be used for incrementally checking network updates, but it requires an in-memory ``streaming'' MapReduce run-time, whereas Delta-net avoids the overheads of such a distributed system. Since Libra's partitioning scheme into disjoint subnets is orthogonal to our algorithm, however, it would be interesting to leverage both ideas together in future work.

\section{Concluding remarks}

In this paper, we presented Delta-net (\cref{sec:delta-net}), a new data plane checker that is inspired by program analysis techniques in the sense that it automatically refines a lattice-theoretical abstract domain to precisely represent the flows of all packets in the entire network. We showed that this matters from a theoretical and practical point of view: Delta-net is asymptotically faster and/or more space efficient than prior work~\cite{KZZCG2013,KCZVMcKW2013,YL2013}, and its new design facilitates Datalog-style use cases~\cite{FFPWGMM2015,LBGJV2015} for which the transitive closure of many or all packet flows needs to be efficiently computed (\cref{subsec:revisited-design-goals}). In addition,  Delta-net can be used to analyze catastrophic network events, such as link failures, for which current incremental techniques are less effective. To show this experimentally (\cref{sec:experiments}), we ran an adaptation of the link failure experiments by Khurshid et al.~\cite{KZZCG2013} on data sets that are significantly larger than previous ones. For this exemplar ``what if'' scenario, we found that Delta-net is several orders of magnitude faster than the state-of-the-art (\Cref{table:delta-net-vs-veriflow-ri}). Our work therefore opens up interesting new research directions, including testing scenarios under different combinations of failures, which have been shown to be effective for distributed systems, e.g.~\cite{YLZRZZJS2014}.

\vspace{-1em}
\paragraph{Future work.} One advantage of Delta-net is that its main loops over atoms in~\Cref{alg:insert-rule}~and~\ref{alg:remove-rule} are highly parallelizable. In addition, (stateless) packet modification of IP prefixes can be easily supported without substantial changes to the data structures by augmenting the edge-labelled graph with the necessary information on how atoms are transformed along hops. We are also studying an improved version of Delta-net that avoids the quadratic space complexity by exploiting properties of IP prefixes. Finally, since a naive implementation of Delta-net is exponential in the number of range-based packet header fields (as is Veriflow's~\cite[Section II]{L2016}), it would be interesting to guide further developments into multi-range support in higher dimensions using the `overlapping degree' among rules~\cite{L2016}.

\small
\paragraph{Acknowledgements.} We would like to thank Sho Shimizu, Pingping Lin and members of the ONOS developer mailing list for technical support. We thank Rao Palacharla, Nate Foster and Mina Tahmasbi for their invaluable feedback on an early draft of this paper. We also would like to thank Ratul Mahajan and the anonymous reviewers of NSDI for their detailed comments and helpful suggestions.

{\footnotesize \bibliographystyle{acm}
\bibliography{delta-net}}

\begin{appendices}
\section{Illustration of Boolean lattice}
\label{appendix:Hasse-diagram}

Delta-net is based on ideas from lattice theory.\footnote{For interested readers, a good introduction to lattice theory, whose applications in computer science are pervasive, can be found in~\cite{DP2002}} In particular, Delta-net leverages the concept of atoms, a form of mutually disjoint ranges that make it possible to analyze all Boolean combinations of IP prefix forwarding rules in a network. The fact that atoms induce a Boolean lattice is illustrated by the Hasse diagram~\cite{DP2002} in~\Cref{fig:Boolean-lattice} where atoms (depicted in bold) correspond to $\atom{0}$, $\atom{1}$ and $\atom{2}$ in~\Cref{fig:half-closed-intervals}, respectively. 

\begin{figure}
\centering
\scalebox{0.92}{
\xymatrix@C=0.7em@R=2.7em{
&&&\top = \{[0:16)\}&&\\
&\{[0:12)\}\ar@{-}[rru]&&\{[0:10),[12:16)\}\ar@{-}[u]&&\{[10:16)\}\ar@{-}[llu]\\
&\{\mathbf{[0:10)}\}\ar@{-}[u]\ar@{-}[rru]&&\{\mathbf{[10:12)}\}\ar@{-}[llu]\ar@{-}[rru]&&\{\mathbf{[12:16)}\}\ar@{-}[u]\ar@{-}[llu] \\
&&&\bot = \emptyset\ar@{-}[llu]\ar@{-}[u]\ar@{-}[rru]&&
}
}
\caption{Boolean lattice induced by the atoms (bold) in~\Cref{fig:half-closed-intervals}, assuming 4-bit numbers for simplicity.}
\label{fig:Boolean-lattice}
\end{figure}
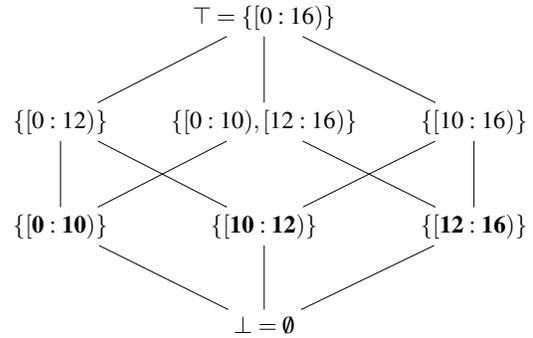

\section{Proof of complexity analysis}
\label{appendix:proof-of-complexity-analysis}

In this appendix, we sketch the proof of the asymptotic worst-case time complexity of~\Cref{alg:insert-rule}~and~\ref{alg:remove-rule}.

\begin{proof}[Proof of~\Cref{theorem:complexity-analysis}]
We analyze \textproc{Insert\_Rule}. Each atom split (\cref{line:insert-rule-split-atoms-begin}-\ref{line:insert-rule-split-atoms-end}) requires copying the owner information from an existing atom to a newly created atom. For insertion of $R$ rules, resulting in $K$ atoms, this requires $O(RK)$ steps in the worst-case. In each insertion, the adjustment of labels and retrieval of the balanced binary search tree (BST) (\cref{line:insert-rule-get-bst}) are amortized constant-time operations per atom.  Inserting each rule into the BST and finding the highest-priority rule per atom (\cref{line:insert-rule-get-highest-priority-rule}) are $O(\log M)$. By the loop (\cref{line:insert-rule-atom-loop-begin}-\ref{line:insert-rule-atom-loop-end}), we get $O(RK + RK \log M) = O(RK \log M)$, concluding the proof. A similar argument proves the claim for \textproc{Remove\_Rule}.
\end{proof}

\section{Comparison to previous data sets}
\label{appendix:data-sets}

In this appendix, we discuss how our data sets compare to previous ones used in the experimental evaluation of Veriflow~\cite{KZZCG2013}.

In particular, it is natural to ask how our RF~1755 data set in~\Cref{table:data-sets} compares to the one used in a previous Veriflow experiment~\cite{KZZCG2013}, which was constructed from 5 million BGP RIP entries and by `replaying' 90,000 BGP updates. While the resulting total number of IP prefix rules in the original RF~1755 data set is not reported, the authors of the Veriflow paper note that ``[t]he largest number of ECs (equivalence classes) affected by a single rule was 574; the largest verification latency was
$159.2\,ms$ due to an update affecting 511 ECs.'' For our experiments, we expect this number to be different, since we had to generate a new data set.

Running Veriflow-RI (\cref{subsubsec:rule-insertions-and-removals}) on our RF~1755 data set, we find that the maximum number of affected ECs on rule insertions is $319,681$, which is significantly larger than the original experimental evaluation of Veriflow~\cite{KZZCG2013}.

\section{Memory usage}
\label{appendix:memory-usage}

In this appendix, we report the detailed memory consumption of Delta-net (\cref{sec:delta-net}) and Veriflow-RI (\cref{subsubsec:rule-insertions-and-removals}) using our eight data sets (\cref{subsec:data-sets}, see~\Cref{table:data-sets}).

\Cref{table:memory-usage} quantifies the memory usage of Delta-net and Veriflow-RI. In all cases, Delta-net consumes between $5$ and $7$ times more space than Veriflow-RI. This increase in memory consumption is offset, however, by the fact that Delta-net keeps track of the forwarding behaviour of all packets, and as a result can check properties that Veriflow-RI cannot. Nevertheless, as discussed for future work (\cref{sec:related-work}), we are actively working on asymptotically reducing the memory consumption of Delta-net.

\begin{table}
\centering
\scalebox{0.76}{
\begin{tabular}{l|r|r}
  \toprule
   \multicolumn{1}{c|}{\textbf{Data set}} & \multicolumn{2}{c}{\textbf{Memory usage} (\textit{MB})} \\
    & \multicolumn{1}{c|}{\small Veriflow-RI} & \multicolumn{1}{c}{\small Delta-net} \\
    \midrule
    Berkeley          & $1,089$ & $6,208$ \\
    INET              & $9,776$ & $63,563$ \\
    RF~1755           & $2,713$ & $16,937$ \\
    RF~3257           & $5,882$ & $40,716$ \\
    RF~6461           & $5,920$ & $39,481$ \\
    Airtel~1          & $7$    & $61$ \\
    Airtel~2          & $9$    & $74$ \\
    4Switch           & $154$  & $785$ \\
  \bottomrule
\end{tabular}
}
\caption{Memory usage of Delta-net and Veriflow-RI.}
\label{table:memory-usage}
\end{table}




\end{appendices}

\end{document}